\documentclass[11pt]{article}
\usepackage{amsmath}
\usepackage{amsfonts,amssymb,amsbsy}

\usepackage{enumerate}
\usepackage{listings}
\usepackage{algpseudocode}
\usepackage{algorithm}
\usepackage{tikz}
\usepackage{circuitikz}
\usepackage{graphicx}
\usepackage{caption}
\usepackage{subcaption}
\usepackage{graphicx}
\usepackage{color}
\usepackage{textcomp,empheq}


\makeatletter
\makeatother

\makeatletter

\@namedef{subjclassname@2010}{%
  \textup{2010} Mathematics Subject Classification}
\makeatother

\newtheorem{thm}{Theorem}[section]

\newtheorem{prop}[thm]{Proposition}

\newenvironment{proof}[1][Proof]{\textbf{#1.} }{\hfill \raisebox{-0.1em}{$\Box$}\\[1.5ex]}



\newtheorem{rem}[thm]{Remark}




\newcommand{\me}{\mathrm{e}}





\newcommand{\D}{\displaystyle}
%

%

\begin{document}
%
\title{A simple analysis of flying capacitor converter}
%
%
%

\author{
  S. Kadyrov\footnote{
  S. Kadyrov, Faculty of Engineering and Natural Sciences, Suleyman Demirel University, Kaskelen 040900, Kazakhstan,
  Email: {\it shirali.kadyrov@sdu.edu.kz}
  }, 
  P.S. Skrzypacz\footnote{
  P.S. Skrzypacz, School of Science and Technology,
  Nazarbayev University,
  53 Kabanbay Batyr Ave., Astana 010000 Kazakhstan,
  Email: {\it piotr.skrzypacz@nu.edu.kz}
  }, and
Y.L. Familiant\footnote{
  Y.L. Familiant, Independent Researcher,
 Thiensville, Wisconsin, USA
  }
}


\markboth{IEEE Transactions on Power Electronics}%
{A simple second order low pass circuit analysis}
%



\maketitle

\begin{abstract}
{\bf Purpose - } The paper aims to emphasise how switched systems can be analysed with elementary techniques which require only undergraduate-level linear algebra and differential equations. It is also emphasised how math software can become useful for simplifying analytic complications.

{\bf Design/methodology/approach - }The time domain voltage balance methodology is used for stability analysis. As for deriving formulas for the asymptotic average of both capacitor voltage and inductor current, a new simple analytic method is introduced.

{\bf Findings - } It was shown analytically that the time average of capacitor voltage converges to half of the source voltage. A formula for the time average of the current of the inductor is also computed. As a by-product it was discovered that the period of the current is half of the switching period. Numerical simulations are obtained to illustrate the accuracy of the results.

{\bf Research limitations/implications - } Higher dimensional generalisations could become a bit complicated as stability analysis of higher dimensional exponential matrices are not so easy to handle. On the other hand, the new discovery on the period of the current is more likely to give new insights in handling higher dimensional systems. 

{\bf Practical implications - } Analytical formulas are exact and it helps in accurately modelling flying capacitor converts in practice. 

{\bf Originality/value - } FCC is well studied in engineering society. However, not much is done in obtaining exact formulas using analysis. Also, math software is much used in computation of numerical results and obtaining simulations. In this paper, one more important aspect of math software is emphasised, namely, use symbolic computations in analysis.

\end{abstract}

\textbf{Keywords}
circuits, exponential matrix, linear differential equation, Maple\texttrademark, 
periodic solutions, piecewise constant coefficients, flying capacitor converter\\

\section{Introduction}
Many natural phenomena can be described either with continuous-time models or discrete-time models. There are cases when one type of modeling is not accurate and instead an interplay between continuous and discrete time dynamics is essential. 
Such models are called switched systems and the current paper deals with one such system, the flying capacitor converter (FCC). The FCC topology has advantages, such as having a natural balancing property and being able to operate at higher voltages, cf. \cite{St2010}. It can be used both as an AC modulation and DC-DC power converter. The load current high order harmonics imply the fast balancing of the capacitors \cite{MFA97}. In this paper, a simple three-level single-leg flying capacitor converter  is considered and its voltage balance dynamics are studied, see Fig.~\ref{fig_0}. The converter consists of one voltage source $V_{dc}$, four switches, and a capacitor $C$. Moreover, the load is modeled by an inductor $L$ connected in series with a resistor $R$. There are various techniques developed to analyze switch mode power supplies, see e.g. \cite{RRM08, Ki12, KS15, YLP15}. Following \cite{RRM08}, the authors are using the time domain voltage balance methodology for stability analysis. However, for analytically deriving the formulas for the asymptotic average of both capacitor voltage and inductor current, a very simple method is introduced. The main goal in this note is to emphasize how switched systems can be analyzed with elementary techniques which require only undergraduate-level basic linear algebra and ordinary differential equations. It is also emphasized how math software can become useful for simplifying analytic complications. Moreover, the Maple\texttrademark ~software scripts used for algebraic operations and numerical simulations are provided for students' convenience. The methods introduced in this paper show practical applications of basic undergraduate courses. The interesting problems suggested in the conclusion can be presented as an undergraduate capstone project and the results can be published in scientific journals. 
\begin{figure}[hbt!]
	\begin{center}
		\includegraphics[scale=0.4]{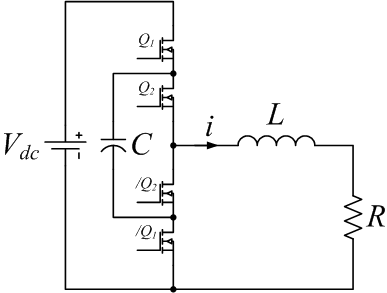}
	\end{center}
	\caption{Flying capacitor DC/DC converter.
		\label{fig_0}}
\end{figure}

\begin{figure}[htb]
\begin{center} 
\begin{subfigure}[htb]{0.35\textwidth}
 \begin{center}
		\includegraphics[scale=0.3]{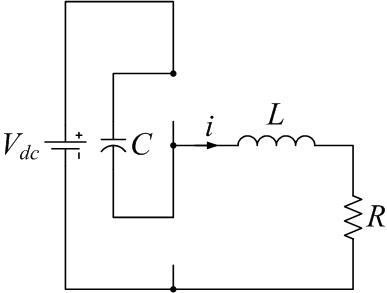}
    \caption{$t\in\left[(k-1)T/2,kT/2\right]$, $k=1,2,3,\ldots$}
  \end{center}
   \end{subfigure}\quad
\hspace*{0.5cm}\begin{subfigure}[htb]{0.35\textwidth}
  \begin{center}
  	\includegraphics[scale=0.3]{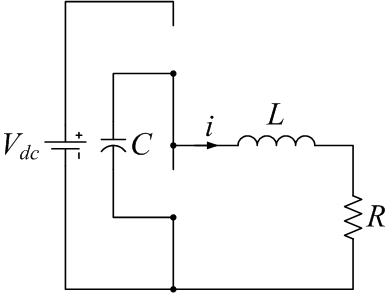}
    \caption{$t\in\left[kT/2,(k+1)T/2\right]$, $k=1,2,3,\ldots$}
  \end{center}
\end{subfigure}
\caption{FC converter switching states}\label{fig_FC_states}
  \end{center}
\end{figure}

\section{Model equation}
Let $\boldsymbol{x}(t):=\left[i(t), v(t)\right]^{^\intercal}$  be the column vector whose unknown components are the inductor current $i(t)$ and the capacitor voltage $v(t)$ at the time $t$, and superscript $^\intercal$ is the transpose. Given positive parameters $R,L,C,V_{dc}$, and $T,$ the authors consider the following non-homogeneous ODE system with periodic piecewise linear coefficients: 
\begin{empheq}[left=\empheqlbrace]{align}
\D\boldsymbol{x}'&=A_1\boldsymbol{x}+\boldsymbol{b}_1 & \text{on}\quad & \left[(k-1)T/2,kT/2\right],\label{eqn_A1_A2_a}\\[1.2ex]
\D\boldsymbol{x}'&=A_2\boldsymbol{x}                  & \text{on}\quad & \left[kT/2,(k+1)T/2\right],\label{eqn_A1_A2_b}\
\end{empheq}
where 
\begin{equation*}
A_1=\left[ \begin{matrix} -\frac{R}{L} & -\frac1L \\[1.5ex] 
\frac1C & 0\end{matrix}\right],~  
A_2=\left[ \begin{matrix} -\frac{R}{L} & \frac1L \\[1.5ex]
 -\frac1C & 0\end{matrix}\right],~ 
\boldsymbol{b}_1=\left[ \begin{matrix} \frac{V_{dc}}{L}  \\[1.5ex]  0\end{matrix}\right]\,,
\end{equation*}
 and $k=1,2,3,\ldots$\,. 
Here, $\D\boldsymbol{x}'=\frac{d\boldsymbol{x}}{dt}$ denotes the time derivative of $\boldsymbol{x}(t)$. The ODE system \eqref{eqn_A1_A2_a}-\eqref{eqn_A1_A2_b} describes the switching states of the FC converter, see Fig.~\ref{fig_FC_states}.\\
The first result is on existence of the periodic solution and its stability.
\begin{thm}\label{thm:main}
The system has a globally asymptotically stable periodic solution. It satisfies
\begin{equation}\label{eqn:xt}
\boldsymbol{x}(t)= \begin{cases}\me^{tA_1}\boldsymbol{x}(0)+ A_1^{-1} (\me^{tA_1} -I)\boldsymbol{b}_1 
& \text{if}~ t\in [0,\frac{T}{2}]\,,\\[1.5ex] 
\me^{(t-\frac{T}{2})A_2}\boldsymbol{x}(\frac{T}{2})& \text{if}~ t\in(\frac{T}{2},T]\,,
\end{cases}
\end{equation}
where 
\begin{equation}\label{eqn:x0}
\boldsymbol{x}(0)= \left(I-\me^{\frac{T}{2} A_2}\me^{\frac{T}{2} A_1}\right)^{-1}
\me^{\frac{T}{2}A_2} A_1^{-1}\left(\me^{\frac{T}{2}A_1} -I\right)\boldsymbol{b}_1.
\end{equation}
\hfill$\diamond$
\end{thm}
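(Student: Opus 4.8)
The plan is to build the candidate periodic solution by solving each of the two phases with variation of constants, and then to derive both the existence/uniqueness claim and the global asymptotic stability claim from a single fact: the stroboscopic (monodromy) matrix $M := \me^{\frac{T}{2}A_2}\me^{\frac{T}{2}A_1}$ has spectral radius strictly less than $1$.

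\emph{Step 1 (explicit one-period solution).} On $[0,T/2]$ the equation \eqref{eqn_A1_A2_a} is linear with constant data and $A_1$ is invertible because $\det A_1=\tfrac{1}{LC}>0$; variation of constants gives $\boldsymbol{x}(t)=\me^{tA_1}\boldsymbol{x}(0)+\int_0^t\me^{(t-s)A_1}\boldsymbol{b}_1\,ds=\me^{tA_1}\boldsymbol{x}(0)+A_1^{-1}(\me^{tA_1}-I)\boldsymbol{b}_1$, which is the first branch of \eqref{eqn:xt}. On $(T/2,T]$ equation \eqref{eqn_A1_A2_b} is homogeneous, so $\boldsymbol{x}(t)=\me^{(t-\frac{T}{2})A_2}\boldsymbol{x}(\tfrac{T}{2})$, the second branch. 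Composing the two maps and imposing $T$-periodicity $\boldsymbol{x}(T)=\boldsymbol{x}(0)$ yields the affine equation $(I-M)\boldsymbol{x}(0)=\me^{\frac{T}{2}A_2}A_1^{-1}(\me^{\frac{T}{2}A_1}-I)\boldsymbol{b}_1$; once $I-M$ is invertible this has the unique solution \eqref{eqn:x0}, giving a unique $T$-periodic orbit.

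\emph{Step 2 (an energy norm that no phase expands).} I would introduce $P:=\mathrm{diag}(L,C)$, for which $\boldsymbol{x}^\intercal P\boldsymbol{x}=Li^2+Cv^2$ is twice the energy stored in the inductor and capacitor. A direct computation gives $A_1^\intercal P+PA_1=\mathrm{diag}(-2R,0)$, and since $A_2=JA_1J$ with $J:=\mathrm{diag}(1,-1)$ and $JPJ=P$, the same identity holds for $A_2$. Hence along any homogeneous trajectory of $A_1$ or $A_2$, $\tfrac{d}{dt}(\boldsymbol{x}^\intercal P\boldsymbol{x})=-2R\,i(t)^2\le 0$: the stored energy is non-increasing.

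\emph{Step 3 (strict contraction and conclusion).} Because $A_i^\intercal P+PA_i$ is only negative \emph{semi}definite, the estimate of Step 2 must be upgraded. If $\boldsymbol{x}(0)\ne 0$ then $i\not\equiv 0$ on any $[0,s]$ with $s>0$ — either $i(0)\ne 0$, or $i(0)=0$ and then $v(0)\ne 0$ forces $i'(0)=\mp v(0)/L\ne 0$ — so $\int_0^s i^2\,dt>0$ and therefore $\bigl(\me^{sA_i}\boldsymbol{x}(0)\bigr)^\intercal P\,\me^{sA_i}\boldsymbol{x}(0)<\boldsymbol{x}(0)^\intercal P\boldsymbol{x}(0)$. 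By compactness of the $P$-unit sphere the induced operator norm obeys $\|\me^{sA_i}\|_P<1$ for every $s>0$, so submultiplicativity gives $\rho(M)\le\|M\|_P\le\|\me^{\frac{T}{2}A_2}\|_P\,\|\me^{\frac{T}{2}A_1}\|_P<1$. Consequently $I-M$ is invertible (Step 1 closes), and for any solution the stroboscopic sequence $\boldsymbol{x}_n:=\boldsymbol{x}(nT)$ satisfies $\boldsymbol{x}_{n+1}=M\boldsymbol{x}_n+\me^{\frac{T}{2}A_2}A_1^{-1}(\me^{\frac{T}{2}A_1}-I)\boldsymbol{b}_1$, whose unique fixed point is the periodic initial value; since $\rho(M)<1$ the iterates converge to it geometrically from every start, and feeding this back through the bounded matrix exponentials of \eqref{eqn:xt} yields convergence of the entire trajectory to the periodic solution, i.e.\ global asymptotic stability. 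I expect Step 3 — turning the semidefinite energy dissipation into the strict bound $\|\me^{sA_i}\|_P<1$ (equivalently, a LaSalle/observability argument that no nonzero trajectory is trapped in $\{i=0\}$) — to be the only genuinely delicate point; the rest is routine manipulation of matrix exponentials, and as a cross-check the eigenvalues of $M$ can be examined directly with Maple.
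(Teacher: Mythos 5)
Your proposal is correct, and Steps 1 and the final bookkeeping (iterating the stroboscopic affine map $\boldsymbol{x}_{n+1}=M\boldsymbol{x}_n+N\boldsymbol{b}_1$ and reducing everything to $\rho(M)<1$) coincide with what the paper does. Where you genuinely diverge is in \emph{how} you establish $\rho(M)<1$. The paper computes the characteristic polynomial $\lambda^2+\alpha\lambda+\beta$ of $M=\me^{\frac{T}{2}A_2}\me^{\frac{T}{2}A_1}$ symbolically in Maple and verifies the Jury/Schur--Cohn conditions $|\beta|<1$ and $|\alpha|<1+\beta$ by hand, splitting into the cases where $d=\sqrt{a^2-4bc}$ is real or imaginary and reducing each to a monotonicity property of $t\mapsto\sinh(t)/t$. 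You instead exhibit the physical energy $\boldsymbol{x}^\intercal P\boldsymbol{x}=Li^2+Cv^2$ as a common quadratic storage function for both subsystems (your identities $A_i^\intercal P+PA_i=\mathrm{diag}(-2R,0)$ and $A_2=JA_1J$ check out), and then upgrade the semidefinite dissipation to the strict bound $\|\me^{sA_i}\|_P<1$ by the observability remark that no nonzero trajectory can remain in $\{i=0\}$, so that submultiplicativity gives $\|M\|_P<1$ directly. Your route buys coordinate-free generality: it needs no symbolic computation, no case analysis on the discriminant, and it scales verbatim to multilevel converters with more capacitors (precisely the extension the paper flags as difficult because "higher dimensional exponential matrices are not so easy to handle"). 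What it gives up is the explicit spectral data $\alpha=\mathrm{tr}\,M$ and $\beta=\det M=\me^{-2a}$ that the paper's computation produces as a by-product. One small presentational caveat: the delicate point you correctly identify in Step 3 (strictness via the compactness of the $P$-unit sphere) deserves to be written out as carefully as you sketch it, since the naive conclusion from $A_i^\intercal P+PA_i\preceq 0$ alone would only give $\|\me^{sA_i}\|_P\le 1$, which does not suffice.
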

Recall that global asymptotic stability means that the continuous solution for any given initial values converges to the periodic solution over time. In particular, the periodic solution is unique.

\begin{proof}
For any initial values $\boldsymbol{x}_0$, it is easy to verify that the unique continuous solution satisfies
\begin{equation}\label{eqn:xtt}
\boldsymbol{x}(t)=\begin{cases}\me^{tA_1}\boldsymbol{x}_0+ A_1^{-1}(\me^{tA_1} -I)\boldsymbol{b}_1 
&~ \text{if}~ t\in [0,\frac{T}{2}]\,,\\[1.5ex] 
\me^{(t-\frac{T}{2})A_2}\boldsymbol{x}(\frac{T}{2})& ~\text{if}~ t\in (\frac{T}{2},T]\,.
\end{cases}
\end{equation}
So, it is sufficient to show that for any initial values $\boldsymbol{x}_0$ there exists a solution which satisfies 
$\boldsymbol{x}(kT) \to \boldsymbol{x}(0)$ as $k \to \infty$, where $\boldsymbol{x}(0)$ is defined as 
in \eqref{eqn:x0}. 
Let 
\[
M:=\me^{\frac{T}{2}A_2}\me^{\frac{T}{2}A_1}
\quad
\text{and} 
\quad
N:=\me^{\frac{T}{2}A_2}A_1^{-1} (\me^{\frac{T}{2}A_1} -I)\,.
\]
Iterating \eqref{eqn:xtt} yields
\begin{align*}
\boldsymbol{x}(kT)&=\me^{\frac{T}{2}A_2}\boldsymbol{x}(kT-T/2)\\
&=\me^{\frac{T}{2}A_2}\bigl\{\me^{\frac{T}{2}A_1}\boldsymbol{x}((k-1)T)
+ A_1^{-1}(\me^{\frac{T}{2}A_1} -I)\boldsymbol{b}_1\bigr\}\\
&=M\boldsymbol{x}((k-1)T)+N\boldsymbol{b}_1\\
&\vdots\\
&=M^k\boldsymbol{x}_0 + \left(\sum_{i=0}^{k-1} M^k\right) N\boldsymbol{b}_1.
\end{align*}
If one can show that the matrix $M$ has all eigenvalues of modulus less than $1$, then 
$\boldsymbol{x}(kT) \to (I-M)^{-1} N\boldsymbol{b}_1=\boldsymbol{x}(0)$ 
as $k \to \infty$. To this end, let $
p(\lambda)=\lambda^2+\alpha \lambda+\beta
$
be the characteristic polynomial of $M$. Then, in view of \cite[Fact~11.18.2]{Be05} it is necessary and sufficient to prove that $|\beta|<1$ and $|\alpha| < 1+\beta$. To this end, set for notational convenience
\[
a=\frac{TR}{2L},\quad b=\frac{T}{2L},\quad c=\frac{T}{2C}\,.
\]
Then, $a,b,c>0$ and by Maple symbolic computing
\[
\beta=e^{-2a} \text{ and } \alpha=\frac{e^{-a} (a^2e^d+a^2e^{-d}+2d^2-2a^2)}{-d^2}, 
\]
where $d:=\sqrt{a^2-4bc}$, (see the Maple script in Algorithm~\ref{alg_2}.)
\begin{algorithm}\caption{Maple script to support the proof of Theorem~\ref{thm:main}.\label{alg_2}
}
{\scriptsize
\begin{verbatim}
restart:
with(LinearAlgebra): 
with(MTM):

A1:=Matrix([[-a, -b], [c, 0]]); 
A2:=Matrix([[-a, b], [-c, 0]]); 
M:=expm(A2).expm(A1); 

P:=CharacteristicPolynomial(M, lambda); 
alpha:=simplify(coeff(P, lambda, 1)); 
beta:=simplify(coeff(P, lambda, 0));
\end{verbatim}
}
\end{algorithm}
Clearly, $|\beta|<1$ and $d^2<a^2$. The following two cases will be considered:\\[1.2ex]
{\it Case 1:}~$d$ is real. Then $a>d>0$ and $\alpha<0$. Then, $|\alpha|=-\alpha<1+\beta$ is equivalent to
\[
a^2e^d+a^2e^{-d}+2d^2-2a^2< d^2e^{a}+d^2 e^{-a}\,,
\] 
which simplifies to $(\frac{1}{d} e^{d/2}-\frac{1}{d}e^{-d/2})^2<(\frac{1}{a}e^{a/2}-\frac{1}{a}e^{-a/2})^2.$ This is obvious since the function $f(t)=\frac{1}{t}\sinh (t)$ is increasing on $t > 0$ and $a>d>0$.\\[1.2ex]
{\it Case 2:}~$d$ is imaginary. Then, $|\alpha|<1+\beta$ if and only of
$$\pm(\frac{1}{d^2}(e^d+e^{-d}-2)+\frac{2}{a^2})< \frac{1}{a^2}(e^a+e^{-a}).$$
As $d$ is imaginary, $e^d+e^{-d}=2\cos(|d|)$, which results in $(e^d+e^{-d}-2)\le 0.$ First, note that
\[
-(\frac{1}{d^2}(e^d+e^{-d}-2)+\frac{2}{a^2})< \frac{1}{a^2}(e^a+e^{-a})
\] 
 is the same as $-\frac{1}{d^2}(e^d+e^{-d}-2)< \frac{1}{a^2}(e^a+e^{-a}+2).$ This is obvious as the left hand side is negative ($d^2<0$), while the right hand side is positive. Finally, it is left to show 
\[
\frac{1}{d^2}(e^d+e^{-d}-2)+\frac{2}{a^2}< \frac{1}{a^2}(e^a+e^{-a})\,,
\]
that is, $\frac{2\cos(|d|)-2}{d^2}<\frac{e^a+e^{-a}-2}{a^2}.$ Notice that 
$
(2\cos(|d|)-2)/d^2=4\sin^2(|d|/2)/|d|^2 \le 1\,.
$
Hence, it is sufficient to prove that $1 < \frac{e^a+e^{-a}-2}{a^2}$, that is, $a/2 < \sinh(a/2)$ 
which is well-known and left to the reader.
%
\end{proof}
Let $\langle f \rangle=\lim\limits_{n \to \infty} \frac{1}{n} \int\limits_0^n f(t)\,dt$ denote the average of $f$. Notice that $\langle v\rangle$ and $\langle i\rangle$  exist because of the global stability due to Theorem~\ref{thm:main}. The following theorem shows how Theorem~\ref{thm:main} 
can be used to find the capacitor voltage and inductor current averages.
\begin{thm}\label{thm:Vave}
Let $[i(t), v(t)]^{^\intercal}$ be a solution of \eqref{eqn_A1_A2_a}-\eqref{eqn_A1_A2_b}. Then, it holds true that
\begin{equation}\label{eq_V_ave}
\langle v \rangle = \frac12 (v(0)+v(T/2))=\frac{V_{dc}}2
\end{equation}
and
\begin{equation}\label{eq_i_ave}
\langle i \rangle = \frac{2C}{T} (V_{dc}-2v(0))=\frac{V_{dc}}{2R}\pm \frac{T}{2RC}\max_{t\in [0,T/2]}|i(t)|\,.
\end{equation}
\hfill $\diamond$
\end{thm}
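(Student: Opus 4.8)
The plan is to reduce everything to the unique periodic solution and then exploit a \emph{half-wave symmetry} of that solution. By Theorem~\ref{thm:main} the system has a globally (in fact exponentially) asymptotically stable periodic solution $\boldsymbol{x}_\star=[i,v]^\intercal$, so for any solution the Ces\`aro averages $\langle v\rangle,\langle i\rangle$ equal the ordinary $T$-averages $\frac1T\int_0^T$ of the components of $\boldsymbol{x}_\star$; hence I may assume $[i,v]^\intercal=\boldsymbol{x}_\star$, with $\boldsymbol{x}_\star(0)$ as in \eqref{eqn:x0}, and work with $\boldsymbol{x}_\star$ extended $T$-periodically. Set $S:=\mathrm{diag}(1,-1)$ and $\boldsymbol{e}:=[0,1]^\intercal$, and record the elementary identities $SA_1S=A_2$, $S\boldsymbol{b}_1=\boldsymbol{b}_1$, and $V_{dc}A_1\boldsymbol{e}=-\boldsymbol{b}_1=-V_{dc}A_2\boldsymbol{e}$, all immediate from the definitions of $A_1,A_2,\boldsymbol{b}_1$.

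Define $\boldsymbol{y}(t):=S\boldsymbol{x}_\star\!\bigl(t+\tfrac T2\bigr)+V_{dc}\boldsymbol{e}$. A short computation using the three identities above shows that $\boldsymbol{y}$ is continuous, $T$-periodic, and satisfies \eqref{eqn_A1_A2_a} on $[0,\tfrac T2]$ and \eqref{eqn_A1_A2_b} on $[\tfrac T2,T]$; that is, $\boldsymbol{y}$ is again a periodic solution of the system. By the uniqueness asserted in Theorem~\ref{thm:main}, $\boldsymbol{y}\equiv\boldsymbol{x}_\star$, which unwinds to
\[
 i\bigl(t+\tfrac T2\bigr)=i(t),\qquad v\bigl(t+\tfrac T2\bigr)=V_{dc}-v(t)\qu{}\text{for all }t.
\]
Thus the inductor current is $\tfrac T2$-periodic (the announced by-product), and in particular $v(0)+v(\tfrac T2)=V_{dc}$. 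For \eqref{eq_V_ave}, split $\int_0^T v$ at $\tfrac T2$, substitute $t\mapsto t+\tfrac T2$ in the second half, and use $v(t)+v(t+\tfrac T2)=V_{dc}$ to get $\int_0^T v=\int_0^{T/2}V_{dc}\,dt=\tfrac T2V_{dc}$, hence $\langle v\rangle=\tfrac{V_{dc}}2=\tfrac12\bigl(v(0)+v(\tfrac T2)\bigr)$. For the first equality in \eqref{eq_i_ave}, the $\tfrac T2$-periodicity gives $\int_0^T i=2\int_0^{T/2}i$, while the second component of \eqref{eqn_A1_A2_a} reads $Cv'=i$ on $[0,\tfrac T2]$, so $\int_0^{T/2}i=C\bigl(v(\tfrac T2)-v(0)\bigr)=C(V_{dc}-2v(0))$ and therefore $\langle i\rangle=\tfrac{2C}{T}(V_{dc}-2v(0))$.

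The remaining identity $\langle i\rangle=\tfrac{V_{dc}}{2R}\pm\tfrac{T}{2RC}\max_{[0,T/2]}|i|$ is where the real work lies, since it must re-express $V_{dc}-2v(0)$ through the extremal current. The route I would follow: the first component of \eqref{eqn_A1_A2_a} shows $i$ solves the homogeneous equation $LCi''+RCi'+i=0$ on $[0,\tfrac T2]$, so integrating $i=-LCi''-RCi'$ and using $i(0)=i(\tfrac T2)$ yields $V_{dc}-2v(0)=L\bigl(i'(0^+)-i'(\tfrac T2{}^-)\bigr)$; because $i$ returns to its starting value it has an interior critical point $t^\star$ with $i'(t^\star)=0$, at which the same equation gives $v(t^\star)=V_{dc}-Ri(t^\star)$ and $LCi''(t^\star)=-i(t^\star)$. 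Combining these with the boundary values $i'(0^+)=\tfrac1L(V_{dc}-v(0)-Ri(0))$ and $i'(\tfrac T2{}^-)=\tfrac1L(v(0)-Ri(0))$ (the latter using $v(\tfrac T2)=V_{dc}-v(0)$, $i(\tfrac T2)=i(0)$) should allow $v(0)$ to be eliminated in favour of $i(t^\star)$, whose modulus is $\max_{[0,T/2]}|i|$, the $\pm$ recording the sign of $i(t^\star)$ (equivalently, whether $v$ increases or decreases across the half-period). I expect the genuine obstacle to be precisely this elimination — including the bookkeeping needed when $i$ has more than one turning point on $[0,\tfrac T2]$, so that one must argue which critical point realises the maximum; everything preceding it is routine. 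As in Algorithm~\ref{alg_2}, a Maple-assisted symbolic elimination, feeding in the explicit form of $\me^{\frac T2A_1}$, would discharge this step cleanly.
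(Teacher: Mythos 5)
Your symmetry argument for the first part is correct and is genuinely different from (and cleaner than) the paper's route. One checks directly that $SA_1S=A_2$, $S\boldsymbol{b}_1=\boldsymbol{b}_1$ and $V_{dc}A_1\boldsymbol{e}=-\boldsymbol{b}_1=-V_{dc}A_2\boldsymbol{e}$, so $\boldsymbol{y}(t)=S\boldsymbol{x}_\star(t+\tfrac T2)+V_{dc}\boldsymbol{e}$ satisfies $\boldsymbol{y}'=A_1\boldsymbol{y}+\boldsymbol{b}_1$ on $[0,\tfrac T2]$ and $\boldsymbol{y}'=A_2\boldsymbol{y}$ on $[\tfrac T2,T]$, and uniqueness of the periodic orbit forces $\boldsymbol{y}\equiv\boldsymbol{x}_\star$, i.e.\ $i(t+\tfrac T2)=i(t)$ and $v(t)+v(t+\tfrac T2)=V_{dc}$. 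This subsumes Proposition~\ref{prop:halfperiod} (which the paper only verifies by a Maple symbolic computation), replaces the energy-type identities \eqref{eq:<V>_1}--\eqref{eq:<V>_6} that the paper uses to reach $v(0)+v(T/2)=V_{dc}$, and yields \eqref{eq_V_ave} and the first equality in \eqref{eq_i_ave} exactly as you describe. That portion of your proof is complete and arguably an improvement.

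The second equality in \eqref{eq_i_ave} is where your proposal has a genuine gap, and the gap begins with a misreading of the statement: the $\pm$ there is not a signed exact identity but shorthand for an error term whose modulus is at most $\frac{T}{2RC}\max_{[0,T/2]}|i|$. You are trying to \emph{eliminate} $v(0)$ exactly in favour of the extremal current, which is not what is claimed and is not achievable by the relations you write down: your identity $V_{dc}-2v(0)=L\bigl(i'(0^+)-i'(\tfrac T2^-)\bigr)$, once the boundary values of $i'$ are substituted, collapses to $V_{dc}-2v(0)=V_{dc}-2v(0)$ and carries no new information, and the remaining elimination is explicitly left open. The actual argument is an estimate, not an identity: integrate the first row of \eqref{eqn_A1_A2_a} over $[0,\tfrac T2]$ and of \eqref{eqn_A1_A2_b} over $[\tfrac T2,T]$; by $i(0)=i(\tfrac T2)=i(T)$ the $\int i'$ terms vanish, giving
\begin{equation*}
T\langle i\rangle=\frac{V_{dc}T}{2R}+\frac1R\left(\int_{T/2}^{T}v(t)\,dt-\int_0^{T/2}v(t)\,dt\right),
\end{equation*}
and then the mean value theorem for integrals together with $|v'|=|i|/C$ and $|\xi-\tfrac T2|\le\tfrac T2$ bounds the bracket by $\frac{T^2}{2C}\max_{[0,T/2]}|i|$, which after dividing by $TR$ gives $\bigl|\langle i\rangle-\frac{V_{dc}}{2R}\bigr|\le\frac{T}{2RC}\max_{[0,T/2]}|i|$. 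You need to supply this (or an equivalent) bound; as written, your proof of the last assertion is incomplete.
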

The first average in Theorem~\ref{thm:Vave} relies on the fact that the period of $i$ is $T/2$. 
Notice that this is rather surprising as all of the coefficients of the ODE system have period $T$ 
and Floquet theory, see e.g. \cite{Ri83}, only guarantees the existence of periodic solutions with period $T$.
The question whether the similar situation appears in the switched systems with more than one capacitor is left for the 
future work.
\begin{prop}\label{prop:halfperiod}
For the initial condition $\boldsymbol{x}(0)$ as in \eqref{eqn:x0}, it holds that
\begin{equation}
i(0)=i(T/2)=i(T)\,.
\end{equation}
\end{prop}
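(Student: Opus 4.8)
The plan is to establish the stronger statement that the state at $t=T/2$ is the image of the state at $t=0$ under the affine reflection $\boldsymbol{z}\mapsto D\boldsymbol{z}+\boldsymbol{w}$, where $D=\operatorname{diag}(1,-1)$ and $\boldsymbol{w}:=(0,V_{dc})^{\intercal}=-A_1^{-1}\boldsymbol{b}_1$. Since this map fixes the first coordinate and $\boldsymbol{w}$ has zero first coordinate, $\boldsymbol{x}(T/2)=D\boldsymbol{x}(0)+\boldsymbol{w}$ immediately forces $i(T/2)=i(0)$; and $i(T)=i(0)$ is just the $T$-periodicity supplied by Theorem~\ref{thm:main}. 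So everything reduces to proving $\boldsymbol{x}(T/2)=D\boldsymbol{x}(0)+\boldsymbol{w}$.

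First I would record the elementary identities behind the symmetry, all one-line matrix checks: $D^2=I$, $D\boldsymbol{w}=-\boldsymbol{w}$, $A_2=DA_1D$, and $A_1^{-1}\boldsymbol{b}_1=-\boldsymbol{w}$. Writing $P:=\me^{\frac{T}{2}A_1}$ and $Q:=\me^{\frac{T}{2}A_2}$, the first of these gives $Q=DPD$, hence $QD=DP$, $DQ=PD$, and --- the crucial relation --- $QP=DPDP=(DP)^2$. Next I would rewrite \eqref{eqn:xt}--\eqref{eqn:x0} in this notation: since $A_1^{-1}(P-I)\boldsymbol{b}_1=(P-I)A_1^{-1}\boldsymbol{b}_1=(I-P)\boldsymbol{w}$, formula \eqref{eqn:xt} becomes $\boldsymbol{x}(T/2)=P\boldsymbol{x}(0)+(I-P)\boldsymbol{w}$, while \eqref{eqn:x0} (equivalently, the period condition $\boldsymbol{x}(T)=Q\boldsymbol{x}(T/2)=\boldsymbol{x}(0)$) reads $(I-QP)\boldsymbol{x}(0)=Q(I-P)\boldsymbol{w}$, and this determines $\boldsymbol{x}(0)$ uniquely because Theorem~\ref{thm:main} shows all eigenvalues of $M=QP$ have modulus $<1$, so $I-QP$ is invertible.

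The core step is to show $(I-DP)\boldsymbol{x}(0)=Q\boldsymbol{w}$. From $QP=(DP)^2$ and the spectral bound on $M=QP$, every eigenvalue $\lambda$ of $DP$ satisfies $|\lambda|^2=|\lambda^2|<1$, so $I-DP$ is invertible too and the equation $(I-DP)\boldsymbol{z}=Q\boldsymbol{w}$ has a unique solution. A short computation shows any such $\boldsymbol{z}$ also solves $(I-QP)\boldsymbol{z}=Q(I-P)\boldsymbol{w}$: from $DP\boldsymbol{z}=\boldsymbol{z}-Q\boldsymbol{w}$ and $Q\boldsymbol{w}=DPD\boldsymbol{w}=-DP\boldsymbol{w}$ one gets $QP\boldsymbol{z}=(DP)^2\boldsymbol{z}=DP(\boldsymbol{z}-Q\boldsymbol{w})=(\boldsymbol{z}-Q\boldsymbol{w})+QP\boldsymbol{w}$, i.e. $(I-QP)\boldsymbol{z}=Q(I-P)\boldsymbol{w}$. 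Since the latter equation's unique solution is $\boldsymbol{x}(0)$, we conclude $(I-DP)\boldsymbol{x}(0)=Q\boldsymbol{w}$. Applying $D$ and using $DQ\boldsymbol{w}=PD\boldsymbol{w}=-P\boldsymbol{w}$ turns this into $P\boldsymbol{x}(0)=D\boldsymbol{x}(0)+P\boldsymbol{w}$, hence $\boldsymbol{x}(T/2)=P\boldsymbol{x}(0)+(I-P)\boldsymbol{w}=D\boldsymbol{x}(0)+\boldsymbol{w}$, and reading off the first component gives $i(T/2)=i(0)$.

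The one genuinely delicate point is the invertibility of $I-DP$: without it the auxiliary equation $(I-DP)\boldsymbol{z}=Q\boldsymbol{w}$ need not have a unique solution and the whole uniqueness argument breaks. I expect this to be the main obstacle, and I would dispose of it precisely as above --- observing $QP=(DP)^2$ so that the eigenvalue estimate already carried out for $M$ in the proof of Theorem~\ref{thm:main} transfers for free to $DP$. As an independent check, one can instead simply substitute \eqref{eqn:xt}--\eqref{eqn:x0} and verify that $i(T/2)-i(0)$ simplifies to $0$ by a brute $2\times2$ symbolic computation, in the spirit of Algorithm~\ref{alg_2}.
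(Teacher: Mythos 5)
Your proof is correct, and it takes a genuinely different route from the paper. The paper reduces the claim to showing that the first component of $(I-\me^{-\frac{T}{2}A_2})\boldsymbol{x}(0)$ vanishes and then delegates that verification to a Maple symbolic computation (Algorithm~\ref{alg_1}); you instead exploit the structural symmetry $A_2=DA_1D$, $A_1^{-1}\boldsymbol{b}_1=-\boldsymbol{w}$ with $D=\operatorname{diag}(1,-1)$, $\boldsymbol{w}=(0,V_{dc})^{\intercal}$, and prove the strictly stronger identity $\boldsymbol{x}(T/2)=D\boldsymbol{x}(0)+\boldsymbol{w}$. All your intermediate identities check out ($Q=DPD$, $QP=(DP)^2$, $Q\boldsymbol{w}=-DP\boldsymbol{w}$, and the transfer of the uniqueness from $(I-QP)\boldsymbol{z}=Q(I-P)\boldsymbol{w}$ to $(I-DP)\boldsymbol{z}=Q\boldsymbol{w}$), and you correctly identify and dispose of the one delicate point, the invertibility of $I-DP$, via the spectral bound on $M=QP$ already established in the proof of Theorem~\ref{thm:main}. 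What your approach buys beyond eliminating the computer-algebra step: reading off the second component of $\boldsymbol{x}(T/2)=D\boldsymbol{x}(0)+\boldsymbol{w}$ gives $v(0)+v(T/2)=V_{dc}$ immediately, which is precisely the identity the paper later extracts in the proof of Theorem~\ref{thm:Vave} through the chain of integral relations \eqref{eq:<V>_2}--\eqref{eq:<V>_6}; and the underlying observation that the affine reflection conjugates the $A_2$-flow into the $(A_1,\boldsymbol{b}_1)$-flow explains conceptually why the current has period $T/2$, which the paper flags as ``rather surprising'' and seems better suited to the multilevel generalisations raised in the conclusion than a coordinate computation would be.
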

\begin{proof} 
First, notice that $\boldsymbol{x}(0)=\boldsymbol{x}(T)$ implies $i(0)=i(T)$. 
In order to show $i(0)=i(T/2)$, compare $\boldsymbol{x}(0)$ to $\boldsymbol{x}(T/2)$. 
By Theorem~\ref{thm:main},
$\boldsymbol{x}(0)=\boldsymbol{x}(T)=\me^{-\frac{T}{2}A_2}\boldsymbol{x}(T/2)$. 
Thus, it follows from \eqref{eqn:x0} that
\begin{align*}
&\boldsymbol{x}(0)-\boldsymbol{x}(T/2)
= (I - \me^{-\frac{T}{2}A_2})\boldsymbol{x}(0)= (I-\me^{-\frac{T}{2}A_2})(I-\me^{\frac{T}{2}A_2}\me^{\frac{T}{2}A_1})^{-1}\me^{\frac{T}{2}A_2}A_1^{-1}
(\me^{\frac{T}{2}A_1} -I)\boldsymbol{b}_1\,.
\end{align*}
At this point, the authors find it more convenient to work with Maple symbolic solver. Indeed, the first component of the above vector becomes $0$ which means $i(0)=i(T/2)$. The authors find the solution provided by the symbolic solver satisfactory;  
see the Maple script in Algorithm~\ref{alg_1}.
\end{proof}
\begin{algorithm}\caption{Maple script to support the proof of Proposition~\ref{prop:halfperiod}.\label{alg_1}
}
{\scriptsize
\begin{verbatim}
restart: 
with(LinearAlgebra): 
with(MTM):

a:=-R/L; b:=-1/L; c:=1/C;

A1:=Matrix([[a, b], [c, 0]]); 
A2:=Matrix([[a, -b], [-c, 0]]); 
b1:=Matrix([[VDC/L], [0]]); 
Id:=IdentityMatrix(2); 

x:=MatrixInverse(Id-expm((1/2)*T*A2).expm((1/2)*T*A1))
       .expm((1/2)*T*A2).MatrixInverse(A1)
       .(expm((1/2)*T*A1)-Id).b1; 
xT2:=expm((1/2)*T*A1).x + MatrixInverse(A1)
       .(expm((1/2)*T*A1)-Id).b1; 

VAV:=simplify(1/2*(x(2)+xT2(2))); 
IAV:=unapply(2*C*(VDC-2*x(2))/T, T); 

difference:=(Id-expm(-(1/2)*T*A2)).x; 
simplify(difference(1));
\end{verbatim}
}
\end{algorithm}

Now, Theorem~\ref{thm:Vave} can be proved assuming Proposition~\ref{prop:halfperiod}.

\begin{proof}[Proof of Therem~\ref{thm:Vave}]
Since the system is asymptotically globally stable and converges to the unique periodic solution, the average of $\langle v\rangle$ for any initial condition converges to the average of the periodic solution, that is,
\[
\langle v \rangle=\frac{1}{T} \int\limits_0^T v(t)\,dt\,,
\]
where $v(0)$ is given from \eqref{eqn:x0}. Integrating the first row of \eqref{eqn_A1_A2_a} from 0 to $T/2$ and using Proposition~\ref{prop:halfperiod} results in 
\begin{equation}\label{eqn:1stint}
\begin{split}
0&=\int\limits_0^{T/2}i'(t)\,dt= -\frac{R}{L} \int\limits_{0}^{T/2}i(t)\,dt 
- \frac{1}{L}\int\limits_0^{T/2} v(t)\,dt + \frac{V_{dc}T}{2L}\\
&= -\frac{CR}{L} \int\limits_{0}^{T/2}v'(t)\,dt -\frac{1}{L} \int\limits_0^{T/2} v(t)\,dt + \frac{V_{dc}T}{2L}.
\end{split}
\end{equation}
Integrating the first row of \eqref{eqn_A1_A2_b} from $T/2$ to $T$ and using Proposition~\ref{prop:halfperiod} yields
\begin{equation}\label{eqn:2ndint}
\begin{split}
0&=\int\limits_{T/2}^Ti'(t)\,dt = -\frac{R}{L} \int\limits_{T/2}^{T}i(t)\,dt + \frac{1}{L}\int\limits_{T/2}^T v(t)\,dt
= \frac{CR}{L} \int\limits_{T/2}^{T}v'(t)\,dt + \frac{1}{L}\int\limits_{T/2}^T v(t)\,dt\,.
\end{split}
\end{equation}
Consequently,
\begin{equation*}
\begin{split}
T\langle v \rangle &= \int\limits_0^T v(t)\,dt=-CR \int\limits_{0}^{T/2}v'(t)\,dt + \frac{V_{dc}T}2-CR\int\limits_{T/2}^{T}v'(t)\,dt
=\frac{V_{dc}T}2 + CR (v(0)-v(T))=\frac{V_{dc}T}{2}\,.
\end{split}
\end{equation*}
\begin{figure}[htb!]
\begin{center}
\includegraphics[scale=0.45]{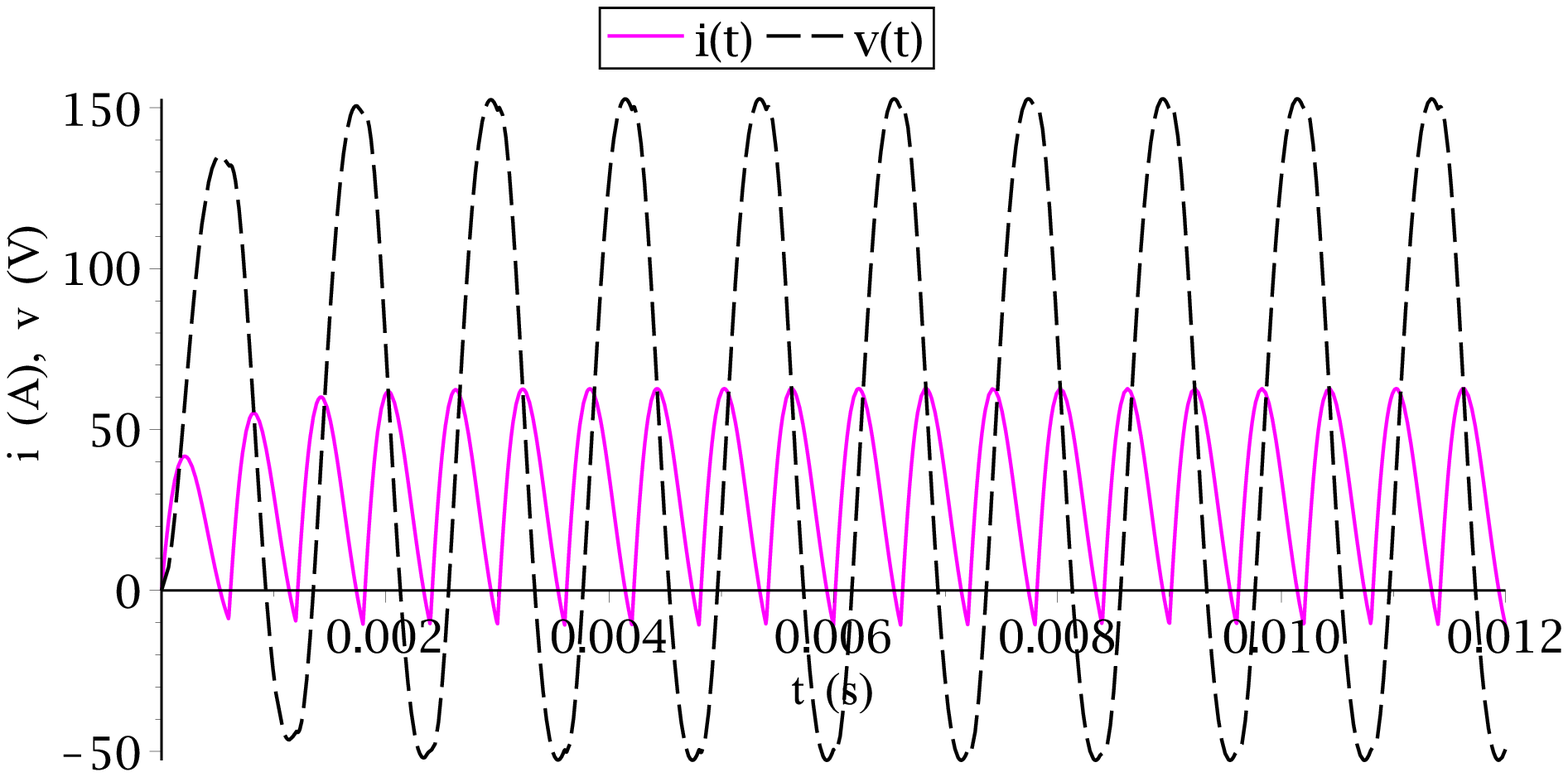}\\
\end{center}
\caption{Current and voltage for $T=1200\cdot 10^{-6} s$, $R=1 \Omega$, $L=0.25\cdot 10^{-3} H$, $C=100\cdot 10^{-6} F$, $V_{dc}=100 V$. 
Averages: $\langle i\rangle = 33.1215 A$, $\langle v\rangle =  50.0000 V$.
\label{fig_1}}
\end{figure}
Multiplying the first and the second rows of \eqref{eqn_A1_A2_a}-\eqref{eqn_A1_A2_b} by $i$ and $v$, respectively, and integrating from $0$ to $T$, results in
\begin{equation}\label{eq:<V>_1}
\int\limits_0^{T/2} v(t)i(t)dt = \int\limits_{T/2}^T v(t)i(t)dt\,,
\end{equation}
and
\begin{equation}\label{eq:<V>_2}
\int\limits_0^{T/2} i^2(t)dt = \frac{V_{dc}}{2R}\int\limits_0^{T/2} i(t)dt\,.
\end{equation}
\begin{algorithm}\caption{Maple script for plotting solutions and calculating the averages.\label{alg_3}
}
{\scriptsize
\begin{verbatim}
restart: 
with(plots):
 
T:=1200*10^(-6);  R:=1;  L:=0.25*10^(-3);  
C:=100*10^(-6);  VDC:=100; 

a:=t->-R/L; 
b:=t->piecewise(t<(1/2)*T, -1/L, 1/L); 
c:=t->piecewise(t<(1/2)*T, 1/C, -1/C); 
b1:=t->piecewise(t<(1/2)*T, VDC/L, 0); 
b_per:=t->b(t-T*floor(t/T)); 
c_per:=t->c(t-T*floor(t/T)); 
b1_per:=t->b1(t-T*floor(t/T)); 

ode1:=diff(i(t),t)=a(t)*i(t)+b_per(t)*v(t)+b1_per(t); 
ode2:=diff(v(t),t)=c_per(t)*i(t); 
ic1:=i(0)=0; 
ic2:=v(0)=0; 
T1:=0; T2:=20*T; 
p:=dsolve({ic1,ic2,ode1,ode2}, numeric, method=rkf45, 
          abserr=10^(-9), maxfun = 500000, range=T1..T2); 

f1:=t->rhs(p(t)[2]); 
f2:=t->rhs(p(t)[3]); 
plot(t->f1(t), t->f2(t)], 0..T2); 
plot(t->f1(t), (T2-2*T)..T2); 

iave=evalf(Int(t->f1(t), (T2-T)..T2))/T; 
iave_theo=VDC/(2*R); 
vave=evalf(Int(t->f2(t), (T2-T)..T2))/T;
vave_theo=VDC/2;
\end{verbatim}
}
\end{algorithm}
Now, integrating only from $0$ to $T/2$ yields   
\begin{equation}\label{eq:<V>_3}
\int\limits_0^{T/2} v(t)i(t)dt=\frac{C}{2}\left(v^2(T/2)-v^2(0)\right)\,
\end{equation}
and by $i(0)=i(T/2)$
\begin{equation}\label{eq:<V>_4}
\begin{split}
&-\frac{R}{L}\int\limits_0^{T/2} i^2(t)dt -\frac{1}{L}\int\limits_0^{T/2}i(t)v(t)dt + \frac{V_{dc}}{L}\int\limits_0^{T/2} i(t)dt\\
&=0\,,
\end{split}
\end{equation}
so 
\begin{equation}\label{eq:<V>_5}
\int\limits_0^{T/2} i(t)dt = \frac{C}{V_{dc}}\left(v^2(T/2)-v^2(0)\right)
\end{equation}
due to \eqref{eq:<V>_2} and \eqref{eq:<V>_3}. Integrating the second row of \eqref{eqn_A1_A2_a} yields
\begin{equation}\label{eq:<V>_6}
\int\limits_0^{T/2} i(t)dt = C\left(v(T/2)-v(0)\right)
\end{equation}
From \eqref{eq:<V>_5} and \eqref{eq:<V>_6} it follows $v(0)+v(T/2)=V_{dc}$. This finishes the proof of \eqref{eq_V_ave}.
Now, the relation \eqref{eq_i_ave} will be proved. From $i(t)=C v'(t)$ on $[0,T/2]$ and $i(t)=-Cv'(t)$ on $[T/2,T]$ it can be deduced that
\[
\langle i \rangle = \frac1T \int\limits_0^Ti(t)\,dt = C (2v(T/2)-v(0) - v(T))\,,
\]
and $v(T)=v(0)=V_{dc}-v(T/2)$ implies the first equality in \eqref{eq_i_ave}. From \eqref{eqn:1stint} and \eqref{eqn:2ndint} it follows that
\begin{equation}\label{eqn:aveTi}
\begin{split}
T \langle i \rangle &= \int\limits_0^Ti(t)\,dt=\frac{V_{dc} T}{2R}+\frac{1}{R} \left(\int\limits_{T/2}^T v(t)\,dt- \int\limits_0^{T/2} v(t)\,dt\right)\,.
\end{split}
\end{equation}
The mean value theorem for integrals implies 
\[
\int\limits_{T/2}^T v(t)\,dt = \frac{T}{2}v(\xi_1)\quad\text{and}\quad 
\int\limits_0^{T/2} v(t)\,dt=\frac{T}2 v(\xi_2)
\]
with some $\xi_1 \in [0,T/2]$ and $\xi_2 \in [T/2,T].$ Then, 
\begin{align*}
v(\xi_1)&=v(T/2)- v'(\eta_1) (T/2-\xi_1)=v(T/2)-\frac{i(\eta_1)}{C}(T/2-\xi_1),\\
v(\xi_2)&=v(T/2)+ v'(\eta_2) (\xi_2-T/2)=v(T/2)- \frac{i(\eta_2)}{C} (\xi_2-T/2),
\end{align*}
for some $\eta_1 \in (\xi_1,T/2)$ and $\eta_2 \in (T/2, \xi_2).$ Thus, \eqref{eqn:aveTi} takes the form
\[
\langle i \rangle =\frac{V_{dc} }{2R}+\frac{1}{2RC} (i(\eta_2)(\xi_2-T/2)-i(\eta_1)(T/2-\xi_1))\,.
\]
\begin{figure}[hbt!]
\begin{center}
\includegraphics[scale=0.45]{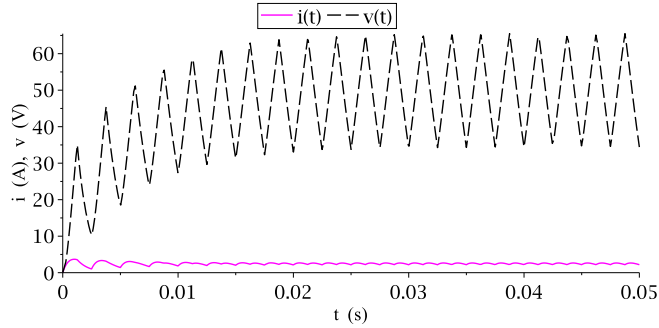}
\end{center}
\caption{Current and voltage for $T=250\cdot 10^{-5} s$, $R=20 \Omega$, $L=10\cdot 10^{-3} H$, $C=100\cdot 10^{-6} F$, $V_{dc}=100 V$. 
Averages: $\langle i\rangle = 2.4922 A$, $\langle v\rangle =  49.9849 V$.
\label{fig_2}}
\end{figure}
Consequently,
\begin{equation*}
\begin{split}
|\langle i \rangle -\frac{V_{dc} }{2R}| &\le|\frac{1}{2RC} (i(\eta_2)(\xi_2-T/2)-i(\eta_1)(T/2-\xi_1))|\le \frac{T}{2RC}\max_{t\in [0,T/2]}|i(t)|\,.
\end{split}
\end{equation*}
\end{proof}
\begin{rem}
The simulation results indicate that $\langle i \rangle \le \frac{V_{dc}}{2R}$; see Table~\ref{table_iave}.
\end{rem}

\section{Numerical experiments}
The computer algebra system Maple was employed to conduct the numerical study; see the Maple script in Algorithm~\ref{alg_3}.
\begin{figure}[hbt!]
\begin{center}
\includegraphics[scale=0.45]{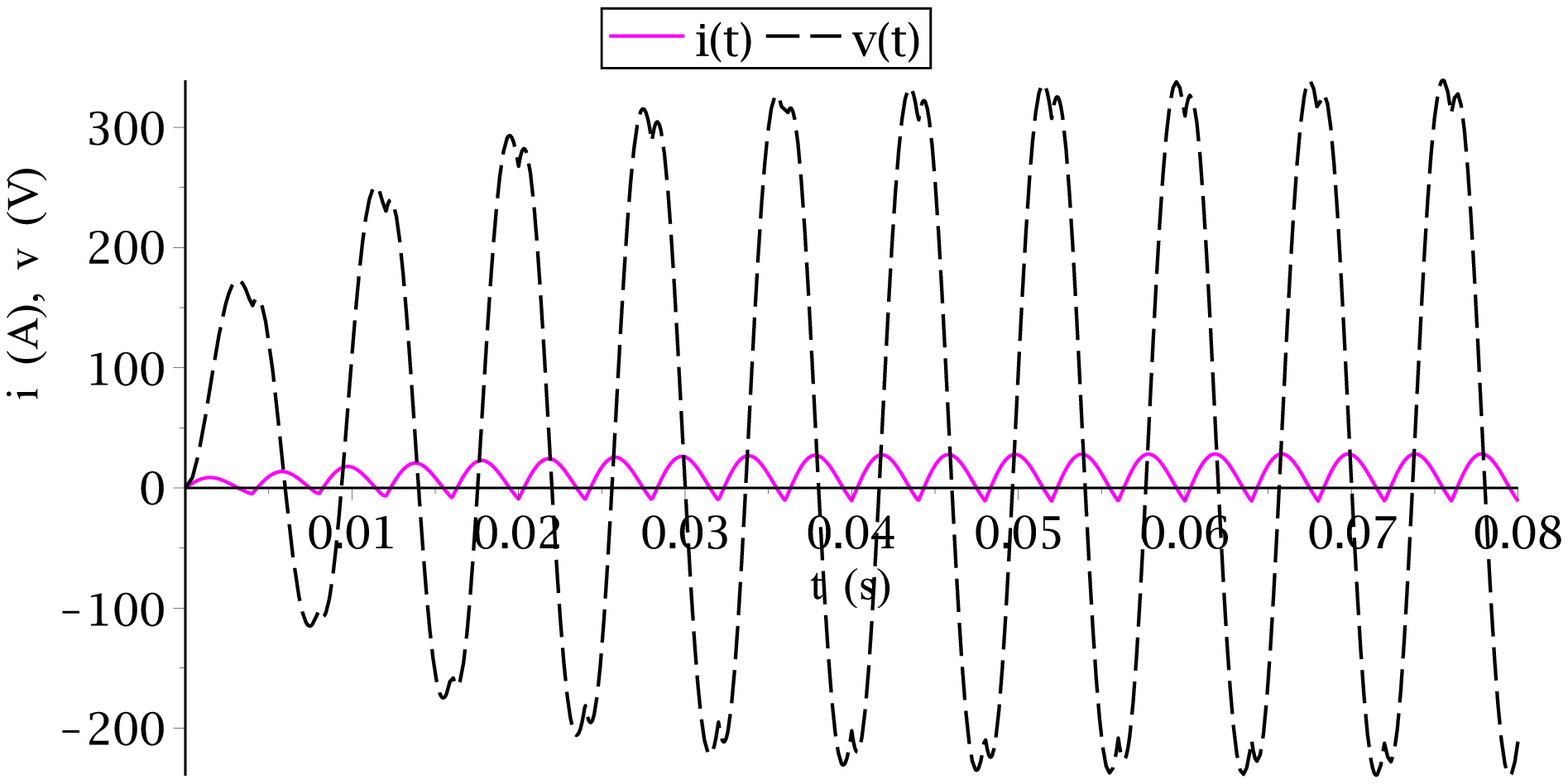}
\end{center}
\caption{Current and voltage for $T=800\cdot 10^{-5} s$, $R=2 \Omega$, $L=10\cdot 10^{-3} H$, $C=100\cdot 10^{-6} F$, $V_{dc}=100 V$. 
Averages: $\langle i\rangle = 13.0181 A$, $\langle v\rangle = 50.0000 V$.
\label{fig_3}}
\end{figure}
The simulation results confirm that the period of the voltage is twice the period of the current. 
The different shapes of profiles for the current and voltage over the normalized two periods are presented in Figure~\ref{fig_4}. The time in Figure~\ref{fig_4} is normalized for each profile due to $t/T_i$ where $T_i$, $i=1,2,3$, is one of the three corresponding periods.
%
\begin{figure}[hbt!]
\begin{center}
~~\includegraphics[scale=0.4]{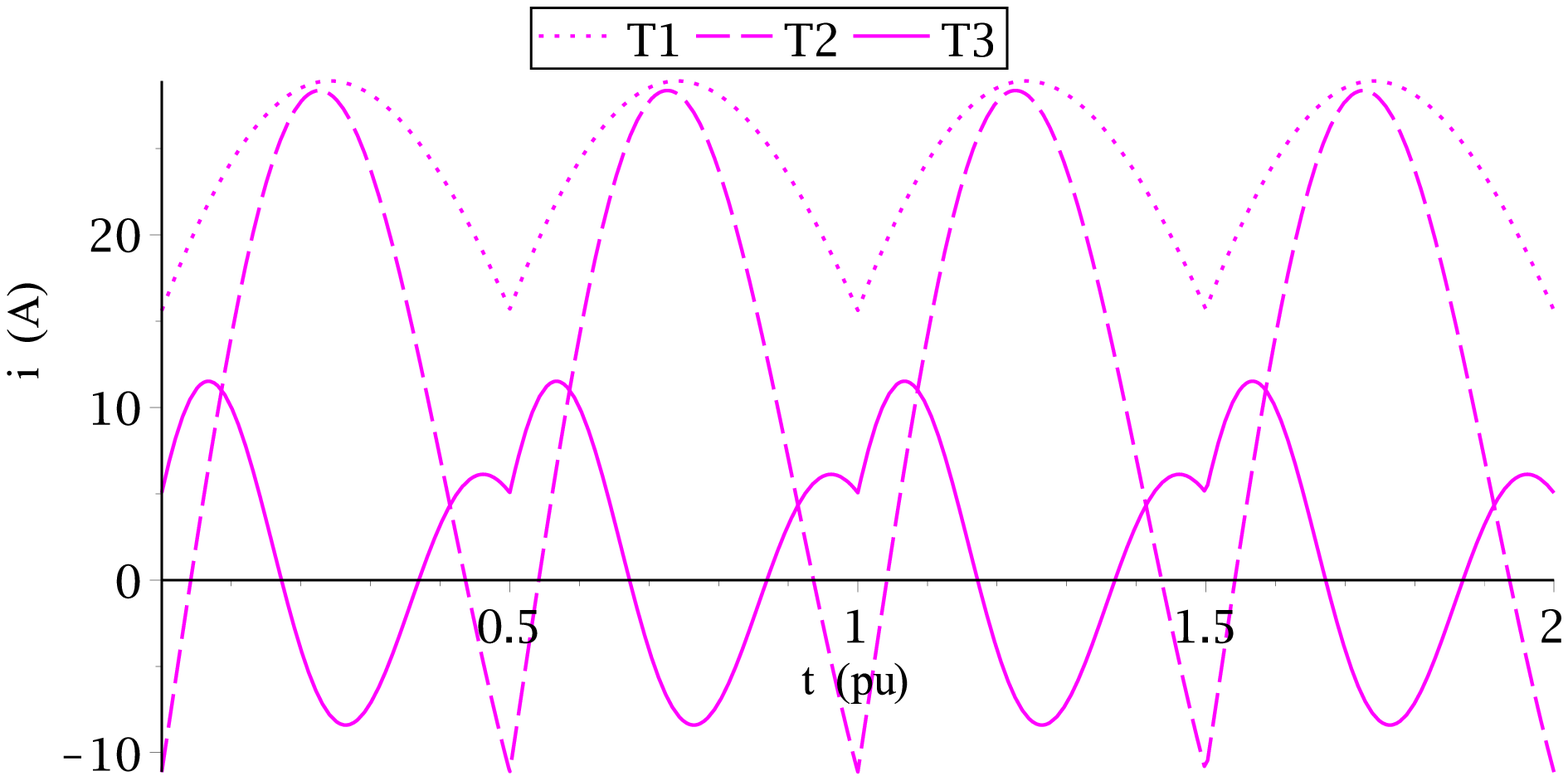}\\[2ex]
\includegraphics[scale=0.4]{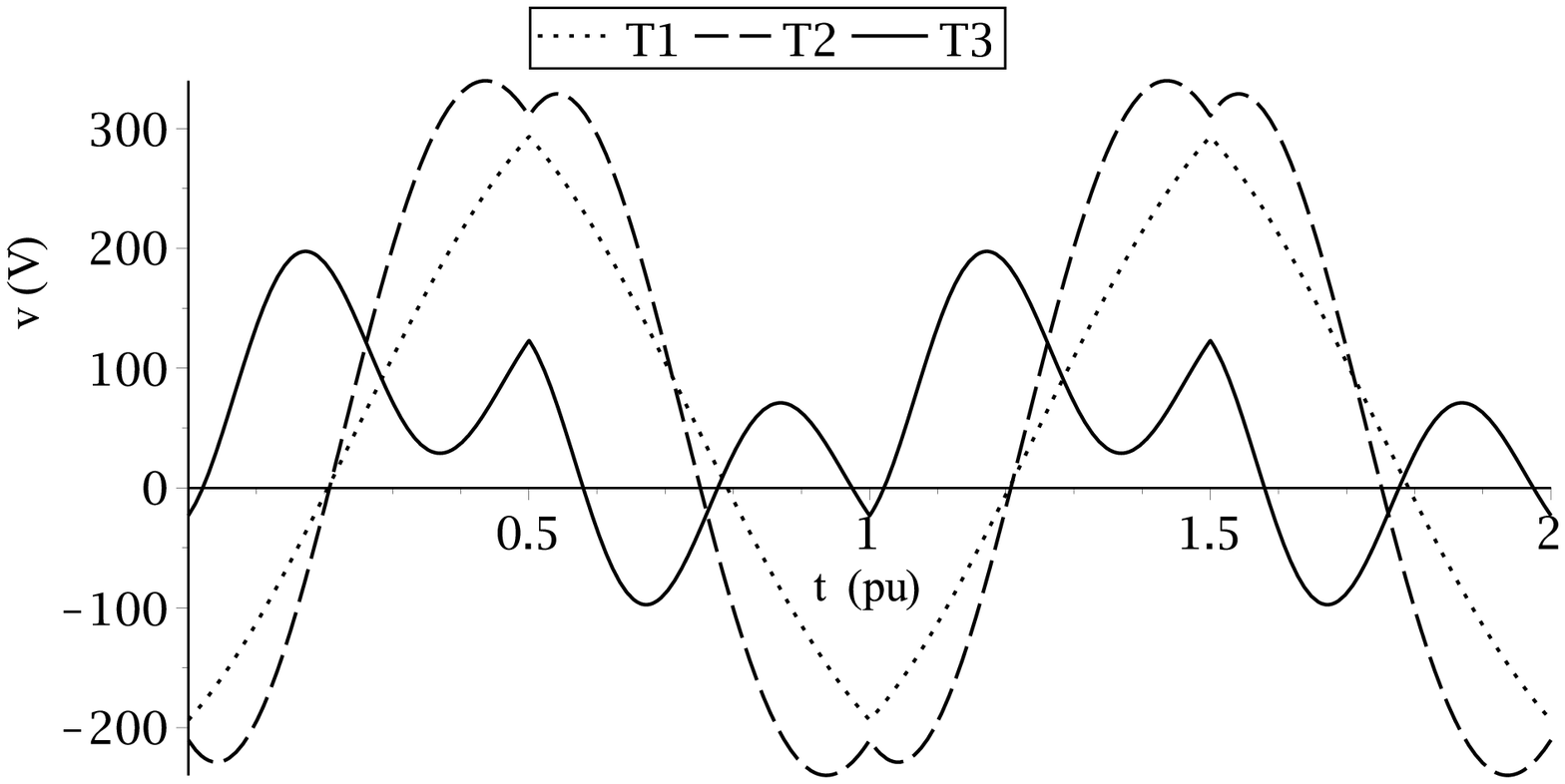}\\[2ex]
\end{center}
\caption{Profiles of current and voltage over two normalized periods for $T_1=400\cdot 10^{-5} s$, $T_2=800\cdot 10^{-5} s$, $T_3=1600\cdot 10^{-5} s$ and
$R=2 \Omega$, $L=10\cdot 10^{-3} H$, $C=100\cdot 10^{-6} F$, $V_{dc}=100 V$.
\label{fig_4}}
\end{figure}
\begin{table}[h!]
\scriptsize
\caption{Averages of current for $\frac{V_{dc}}{2R}=25 A$ and different periods.}\label{table_iave}
\begin{center}
\begin{tabular}{rl|lll}
\hline  
$T$ & in $s$                & $1600\cdot 10^{-5}$ &  $800\cdot 10^{-5}$  & $400\cdot 10^{-5}$\\
$\langle i\rangle$ & in $A$ & $1.8258$ &  $13.0181$  & $24.3412$\\
\hline
\end{tabular}
\end{center}
\end{table}
The numerical average values for the current and voltage are in accordance with theoretical results, see Figures~\ref{fig_1}-\ref{fig_3} and Table~\ref{table_iave}. One can observe that the average of the current deviates from the bound $\frac{V_{dc}}{2R}$ if the period $T$ increases, see Table~\ref{table_iave} and Figure~\ref{fig_5}. Here, equation \eqref{eq_i_ave} from Theorem \ref{thm:main} is used in order to plot the average $\langle i\rangle$ vs. period $T$.
\begin{figure}[htb!]
\begin{center}
\includegraphics[scale=0.4]{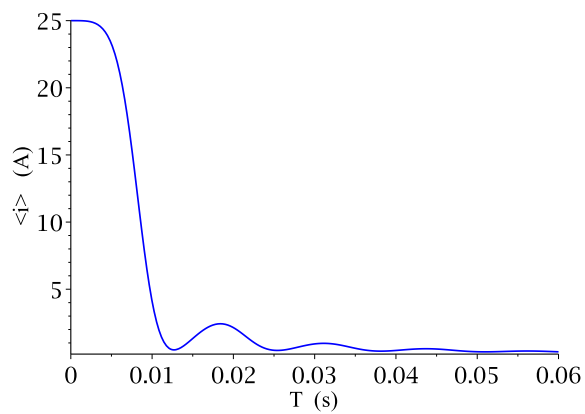}
\end{center}
\caption{Average of current as function of $T$ for $R=2 \Omega$, $L=10\cdot 10^{-3} H$, $C=100\cdot 10^{-6} F$, $V_{dc}=100 V$.
\label{fig_5}}
\end{figure}

\section{Conclusions}
To show the basics of the analysis the authors considered a three-level flying capacitor converter, a simple linear switched system. The periodic switching foreshadows the existence of periodic solutions and this brings two major tasks: a stability analysis of the periodic orbit and a computation of the average value. In this note, the authors used simple methods together with Maple to carry out these tasks. 

There are many interesting problems that can be studied with the introduced techniques. The error term in \eqref{eq_i_ave} hints that the average inductor current can be larger than $\frac{V_{dc}}{2R}$. However, the numerical analysis (see Figure~\ref{fig_5}) suggest that it never exceeds $\frac{V_{dc}}{2R}$. This needs more investigation in the future. In view of equation  \eqref{eq_i_ave}, another interesting task is to find a formula for the amplitude of the current. 

Another general problem is to see how the tools developed in this article can help in analysis of more general multilevel linear switched models. It is especially interesting to find out in which cases the inductor current has frequency twice as large as that of the capacitor voltage.


\section*{Acknowledgment} 
The authors are indebted to Alexander Ruderman from the Department of Electrical Engineering at the Nazabayev University in Astana for providing the problem and fruitful discussions. Helpful comments by Yakov Familiant from Eaton are gratefully acknowledged.



\begin{thebibliography}{10}

\bibitem{Be05}
Bernstein, D.~S. \emph{Matrix Mathematics}, Princeton University Press, 2005.







\bibitem{Ki12} Kislovski, A. (2012), \emph{Dynamic analysis of switching-mode DC/DC converters}, Springer Science \& Business Media.

\bibitem{KS15} Kapil, K. and Stauth, J.T (2015), ``Resonant and multi-mode operation of flying capacitor multi-level DC-DC converters.'' \emph{ Control and Modeling for Power Electronics (COMPEL)}, IEEE 16th Workshop on. IEEE.

\bibitem{MFA97} Meynard, T.,  Fadel, M., and Aouda, N. (1997), ``Modeling of Multilevel
Converters.'' \emph{IEEE Trans.on Industrial Electronics}, vol. 44, no. 3, pp. 356-364.


\bibitem{Ri83} Richards, J.~A. (1983) \emph{Analysis of Periodically Time-Varying Systems}, Springer-Verlag, New York.

\bibitem{RRM08} Ruderman, A., Reznikov, B., and Margaliot, M. (2008), ``Analysis of a flying capacitor converter: A switched systems approach.'' \emph{ 13 EPE-PECM Conference Proceedings.}

\bibitem{St2010} Stala, R. (2010), ``The Switch-Mode Flying-Capacitor DC-DC Converters With Improved Natural Balancing'', \emph{ IEEE Transactions on Industrial Electronics}, vol. 57, no. 4, pp. 1369-1382.

\bibitem{YLP15} Yutian, L., Liu, W.~C., and  Pilawa-Podgurski, R.~CN. (2015), ``An analytical method to evaluate flying capacitor multilevel converters and hybrid switched-capacitor converters for large voltage conversion ratios'', \emph{ Control and modeling for Power Electronics (COMPEL)}, 2015 IEEE 16th Workshop on. IEEE.


\end{thebibliography}
\end{document}